\newtheorem{theorem}{Theorem}
\newtheorem{cor}{Corollary}
\newtheorem{rmk}{Remark}
\newcommand{\Z}{{\mathbb Z}}
\author{Ariel Lerman and Vadim Zharnitsky }
\address{ Department of Mathematics \\ University of Illinois \\ Urbana, IL 61801}
\title{Whispering gallery orbits in Sinai oscillator trap}
\begin{document}

\begin{abstract}
Experimental realizations of  trapping Bose-Einstein condensate lead to a Hamiltonian  system of a classical particle bouncing  off a convex scatterer in the field of an attracting potential \cite{Shepelyansky}. It is shown by application of KAM theory that under some natural conditions there exists positive measure of quasiperiodic solutions near the scatterer's boundary. 
\end{abstract}

\maketitle

\section{Introduction} 
Recently, there have been increased interest in the  systems of billiard type in which a classical  particle is trapped by a convex  attracting potential and scattered from the center of attraction   by a convex  scatterer such as a circle or ellipse \cite{Shepelyansky,ermann,frahm}. The main motivation comes from 
the dynamics of Bose-Einstein condensate in which cold atoms are trapped by harmonic potential and strong optical beam creates a repulsive potential which can be approximated  by a perfectly elastic scatterer.

One of the  simplest, yet, nontrivial examples is provided by the Hamiltonian system of a particle in a harmonic non-centrally symmetric potential with a scattering disc centered at the origin, so 
the Hamiltonian function takes the form
\begin{equation} \label{eq:example}
E = \frac{1}{2m}(p_x^2+p_y^2) + \frac{m}{2} (\omega_x^2 x^2 + \omega_y^2 y^2) + V_d(x,y),
\end{equation}
where $V_d$ corresponds to the potential of the repulsive disk.

 There is some analogy with the well known example of a Sinai type billiard in a square with the  disk scatterer. Such billiards are known to be ergodic \cite{sinai} and one might expect that similar behavior could be observed for the Sinai oscillator traps which were introduced  in \cite{Shepelyansky}. Indeed the results of numerical simulations show largely 
chaotic behavior in such systems with some possibility for elliptic islands in the phase space.

The goal of this article is to provide a simple criterion when such systems are definitely not ergodic for some range of energies in the sense that the corresponding fixed   energy level sets have at least two invariant components of positive measure.  At the same time, we establish analogues of Lazutkin's caustics (also known as whispering gallery orbits by analogy with the sound waves propagating near the walls of  medieval cathedrals) near the boundary of the scatterer. Our approach is the local analysis of the solutions near the boundary of the scatterer. We show that, under some conditions, the solutions stay near the boundary for all time, giving rise to invariant curves in the Poincar\'e map. It is our hope that this analysis will be useful in modeling of Bose-Einstein condensate in the presence of steep repulsive potential.

Another  motivation to consider such systems comes from engineering applications in automation and control problems. There has been a lot of interest in the mixed dynamics when there are both discrete and continuous subsystems in a single dynamical system. 
The billiard-in-a-potential system is one of the simplest examples of the so-called hybrid systems. For more background on the hybrid or switched systems, see multiple books and monographs, e.g. \cite{Liberzon}.

The idea to  consider orbits nearly tangential to the boundary of a billiard domain and realization  that they will stay there for a long time, dates back at least to Birkhoff  \cite{birkhoff}.  He considered  an orbit that  is nearly tangent to the boundary of a convex billiard domain and therefore  bounces many times before the curvature undergoes considerable  change. Next, using a contemporary language, separation of scales can be established which leads to an adiabatic invariant: an approximately conserved quantity corresponding to an action variable in the appropriately defined  action-angle variables.
With the creation of KAM theory, it became  possible to prove that under some non-degeneracy conditions  the adiabatic invariant was restricted to some small  range of values  for all time. In  \cite{arnold}, V.I. Arnold proved that under some smoothness and non-degeneracy conditions  a slowly varying oscillatory Hamiltonian system 
\[
H = H(p,q,\lambda = \epsilon t)
\]
has a perpetually conserved adiabatic invariant $J(p(t),q(t),\lambda(t))$. Later, a similar technique was developed for Hamiltonian systems  with low regularity in the context of the  Littlewood problem, see e.g. \cite{Kunze,Levi}. The main idea was to organize  a series of canonical transformations, which on the one hand, bring the system to  the near-integrable form, and on the other hand, make the  "non-smooth" variable into an evolutionary parameter (time). After integrating the corresponding Hamiltonian  ODEs over the new time variable, one obtains a smooth near-integrable map to which Moser's small twist theorem can be applied to establish a large set of quasiperiodic orbits. Subsequently, this approach was extended to the   problems with impacts, including billiards, so one could establish an adiabatic invariant in non-smooth Hamiltonian systems of the type
\[
H = H(|x|,y, \epsilon t),
\]
see e.g.  \cite{vz}.
For more background on billiard problems, see \cite{tabachnikFrance,tabachnikAMS}.

We use this approach to establish a large set of  quasiperiodic solutions near the boundary of the scatterer by applying KAM theory. Next, we illustrate how the theorem applies in several specific systems, such as given by \eqref{eq:example} with a circular scatterer. We also discuss the corresponding adiabatic invariant in the Sinai oscillator trap.


\section{Model and main results}
Consider a classical particle in a convex potential $V(x,y)>0$ if $(x,y) \neq (0,0)$, $V(0,0)=0$. Let  $\Omega$ be a convex domain containing the origin. We assume that the particle bounces from
 the boundary of the domain according to the  usual billiard law: angle of reflection is equal to the angle of incidence.  We always  assume that both the potential and the domain boundary are given by analytic functions although this condition can be relaxed.

 \begin{figure}[ht]
   
    \includegraphics[scale = 0.38]{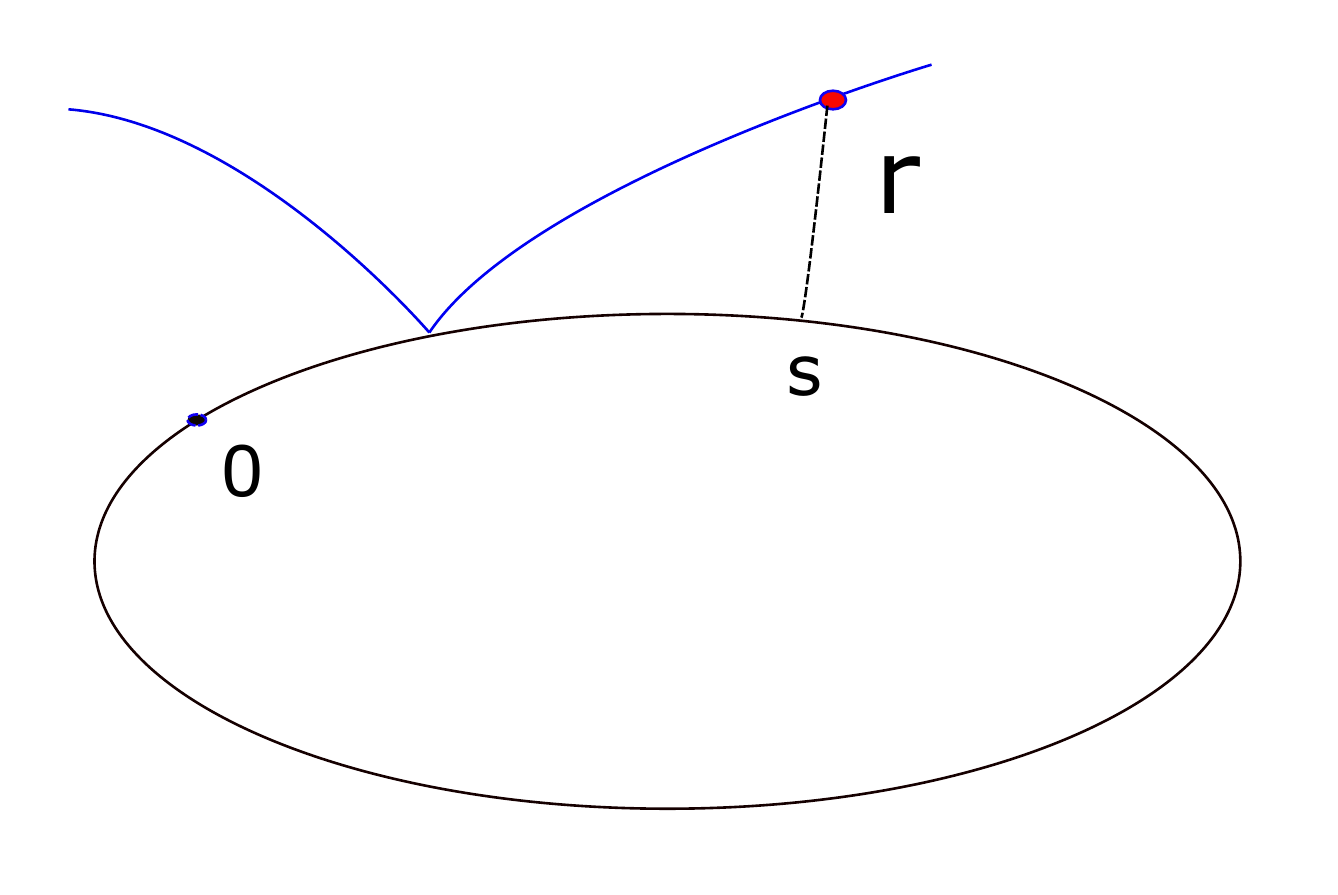}

    \caption{Boundary coordinates. Any point in the exterior of the convex scatterer has a uniquely defined coordinate $r$ equal to the distance from the boundary and coordinate $s$ along the boundary from some marked point $s=0$, measured in the clockwise direction. }

\label{fig:boundary_coord}

\end{figure}
 
 Near the boundary $\partial \Omega$ we introduce the so-called  boundary coordinates $(r,s)$, $r$ is the distance from the boundary and $s$ is the  length along the boundary, see Figure \ref{fig:boundary_coord}.

  \begin{figure}[ht]
    \centering
    \subfloat[]{{\includegraphics[scale = 0.38]{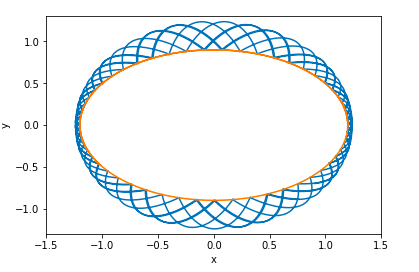} }}%
    \qquad
    \subfloat[]{{\includegraphics[scale = 0.38]{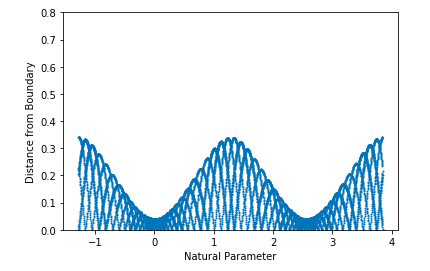} }}%
    \caption{Panel (A):  Orbit which stays near the boundary  in a billiard around an ellipsoidal scatterer in $x$-$y$ coordinates. Panel (B): The same orbit in a billiard around an ellipsoidal scatterer in boundary coordinates}%
    
    \label{fig:circle}

\end{figure}
 
 Assume the potential can be represented in the exterior of the domain $\Omega$ by Taylor series in $r$
 \begin{equation}
 V(r,s) = V_0(s) + r V_1(s) + \frac{r^2}{2}V_2(s) ...
 \end{equation}
 
 The Lagrangian of the classical particle in the potential in the boundary coordinates takes the form
 \begin{equation}
 L = \frac{\dot r^2}{2} + (1+k(s)r)^2 \frac{\dot s^2}{2}  -V(r,s),
 \end{equation}
 which after Legendre transformation with $p_r = \dot r, p_s = \dot s (1+k(s)r)^2$ leads to the  Hamiltonian
  \begin{equation}
 E = \frac{p_r^2}{2} + \frac{p_s^2}{2(1+k(s) r)^2} +V(r,s).
  \end{equation}
 We modify the system by letting $r$ take negative values so we get
   \begin{equation}
 E = \frac{p_r^2}{2} + \frac{p_s^2}{2(1+k(s) |r|)^2} +V(|r|,s).
  \end{equation}
 The last modification is done just for convenience and can be avoided. 
 
 Using the invariant relation  of the 1-form $p_r dr + p_s ds - E dt = p_r dr + E d(-t) - (-p_s) ds$ with the Hamiltonian flow, we obtain an equivalent Hamiltonian system 
  \begin{equation}
 K = -(1 + k(s)|r|) \sqrt{2E-p_r^2 -2V},
  \end{equation}
 where $K= -p_s$ is the new Hamiltonian function and $s$ is the new time. \\
 
 Next, we use the rescaling $K =\epsilon^2 F, s=\epsilon S$, $p_r = \epsilon P_R, r = \epsilon^2 R$ which is convenient for studying orbits in the vicinity of the boundary. The new Hamiltonian takes the form
 \[
 F = -(\epsilon^{-2} + k(\epsilon S) |R|) \sqrt{2 E -2V_0(\epsilon S) -\epsilon^2 P_R^2 -\epsilon^2 2 V_1(\epsilon S)|R| +O(\epsilon^4) }   + O(\epsilon^2).
 \]
 After expanding the square root and some rearrangements, we have
 
  \begin{equation}
 F = \frac{P_R^2}{2a(\epsilon S)} + b(\epsilon S) |R| + \epsilon^2 F_1(|R|, P_R, \epsilon S, \epsilon),
 \end{equation}
 where 
 \[
 a(\epsilon S) = \sqrt{2E - 2V_0(\epsilon S)}
 \]
 and
 \[
 b(\epsilon S) = -k(\epsilon S) \sqrt{2 E -2V_0(\epsilon S)} + \frac{V_1(\epsilon S)}{  \sqrt{2 E -2V_0(\epsilon S)}}.
 \]
 Note that we have dropped the term $\epsilon^{-2}$ as it does not affect the Hamiltonian flow. On the other hand, because of this term $F$ is of order $\epsilon^{-2}$  and we have that  $p_s = -K= -\epsilon^2 F$ is of order 1. 
 In order to have an oscillatory Hamiltonian system, we must require $a>0$ which would follow if $V_0(\epsilon S) <  E$ and we also need $b>0$. 
 The latter condition holds if the potential is strong enough near the boundary to counteract the curvature effects
 \[
 V_1(\epsilon S) > k(\epsilon S ) (2 E-2V_0(\epsilon S)).
 \]
 
As the last step, we introduce more convenient variables $R=x, P_R =y, S=t, F=H$ and we obtain explicitly 
a slowly varying oscillatory  non-smooth Hamiltonian system 
 \begin{equation}\label{eq:basic}
 H = \frac{y^2}{2a(\lambda)} + b(\lambda) |x| + \epsilon H_1(|x|,y,\lambda,\epsilon),
 \end{equation}
 where $\lambda =\epsilon t$.
 This system is in the form for which a non-smooth version of  adiabatic invariance theorem can be proved

 \begin{theorem}
 Let $k(s), V_0(s), V_1(s),E$ satisfy  $E > V_0(s)$, $V_1(s) > 2 k(s) (E-V_0(s))>0$ for all $s\in [0,L)$, then there exist $\epsilon_0>0$ and $C>0$ such that for any $0< \epsilon < \epsilon_0$, if $|r(0)| < \epsilon^2, |\frac{dr}{ds}(0)| < \epsilon$, the solution will stay near the boundary
 $|r(s)| < C \epsilon^2 \,$ for all time.
 \end{theorem}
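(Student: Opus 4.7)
The plan is to transform the reduced non-smooth system \eqref{eq:basic} to action-angle coordinates of its frozen Hamiltonian, derive a smooth exact-symplectic Poincar\'e return map, and apply Moser's small-twist theorem, following the strategy developed for impact systems in \cite{Kunze,Levi,vz}. The reduction already done in the excerpt shows that the hypotheses $E>V_0(s)$ and $V_1(s)>2k(s)(E-V_0(s))$ are equivalent to $a(\lambda),b(\lambda)>0$ along the (closed, hence compactly parametrized and periodic) boundary $\partial\Omega$, so that for each frozen $\lambda$ the Hamiltonian $H_0(x,y,\lambda)=y^2/(2a(\lambda))+b(\lambda)|x|$ is a bouncing-ball-in-gravity oscillator with closed level sets.

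A direct computation of the area enclosed by $\{H_0=h\}$ yields an explicit action--energy relation $h=h(J,\lambda)$ with frequency $\omega(J,\lambda)=h_J=\tfrac{2}{3}\beta(\lambda)J^{-1/3}$ for an explicit positive analytic $\beta(\lambda)$. The generating function
\[
W(x,J,\lambda)=\int_0^x\sqrt{2a(\lambda)\bigl(h(J,\lambda)-b(\lambda)|x'|\bigr)}\,dx'
\]
defines a real-analytic symplectic change of coordinates $(x,y)\mapsto(\phi,J)$ off the line $\{x=0\}$, in which the full Hamiltonian becomes
\[
\tilde H(J,\phi,\lambda,\epsilon)=h(J,\lambda)+\epsilon R_1(J,\phi,\lambda,\epsilon),
\]
smooth in $(J,\phi,\lambda)$ except at the single angle corresponding to $x=0$.

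The non-smoothness is then bypassed by choosing the Poincar\'e section $\{\phi=0\}$ at the outer turning point, where the change of variables is analytic, and integrating the flow for one full $\phi$-revolution. Because the flow between consecutive section crossings is just the Hamiltonian flow in $x\neq0$ glued across $x=0$ by a symmetric $C^1$ continuation (an explicitly integrable operation), the return map $P_\epsilon:(J,\lambda)\mapsto(J',\lambda')$ is smooth and exact-symplectic, with first-order expansion
\[
J'=J+\epsilon f(J,\lambda)+O(\epsilon^2),\qquad \lambda'=\lambda+\epsilon T(J,\lambda)+O(\epsilon^2),\qquad T=2\pi/\omega.
\]
Since $T_J=-2\pi\omega^{-2}\omega_J\neq0$ for $J>0$ and $(a,b)$ are periodic in $\lambda$, the map $P_\epsilon$ fits the framework of Moser's small-twist theorem, which produces, for every sufficiently small $\epsilon$ and every prescribed $J_*>0$ in a compact set, an invariant curve $\Gamma_\epsilon$ of $P_\epsilon$ of diophantine rotation number within $O(\sqrt{\epsilon})$ of $\{J=J_*\}$.

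To close the argument, the initial conditions $|r(0)|<\epsilon^{2}$, $|dr/ds(0)|<\epsilon$ translate under the rescaling $R=\epsilon^{-2}r$, $P_R=\epsilon^{-1}p_r$ to $|R(0)|,|P_R(0)|=O(1)$, hence $J(0)\le J_0=O(1)$. Picking $J_*>J_0$, the invariant curve $\Gamma_\epsilon$ separates the cylinder into two components and confines the orbit to $\{J<J_*\}$ for all iterates of $P_\epsilon$, and hence for all continuous time. This gives $H_0(x(s),y(s),\lambda(s))\le h_*<\infty$ uniformly, so $|R(s)|\le h_*/\min_\lambda b(\lambda)$ and finally $|r(s)|=\epsilon^{2}|R(s)|\le C\epsilon^{2}$. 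The main obstacle is precisely the non-smoothness of $H_0$ at $x=0$, which forbids a direct application of classical KAM; the key point is that it is absorbed by the return-map construction of \cite{vz}, trading pointwise smoothness of the flow for smoothness of the Poincar\'e map between outer turning points, after which the twist and invariance statements are standard.
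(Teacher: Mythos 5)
Your overall architecture is the same as the paper's: reduce to the slowly varying impact oscillator $H_0=y^2/(2a(\lambda))+b(\lambda)|x|$, pass to action--angle variables of the frozen system, build a smooth return map despite the singularity at $x=0$, apply Moser's small twist theorem, and confine the orbit by an invariant curve. Your treatment of the non-smoothness (transversal crossing of $\{x=0\}$ glued by the $x\mapsto -x$ symmetry) is an acceptable substitute for the paper's device of exchanging $(\phi,I)$ with $(\lambda,H)$ so that the singularity sits only at the fixed phases $\phi=0,1/2$ of the new time, between which the flow is analytic; both yield a smooth (indeed analytic) collision/return map, and your confinement argument at the end is correct.

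There is, however, a genuine gap at the point where you invoke Moser's theorem. Your return map reads $J'=J+\epsilon f(J,\lambda)+O(\epsilon^2)$, $\lambda'=\lambda+\epsilon T(J,\lambda)+O(\epsilon^2)$ with $T=2\pi/\omega(J,\lambda)$. Since the frozen frequency depends on $\lambda$ through $a(\lambda),b(\lambda)$ (in the paper's notation $\omega\propto\sigma(\lambda)^{-2/3}J^{-1/3}$), the leading-order advance of the angle variable $\lambda$ depends on $\lambda$ itself, so no choice of "radial'' coordinate $y=y(J)$ puts the map in the form $x_2=x_1+\gamma y_1+\gamma F$ with $\|F\|_0$ small; likewise an $O(\epsilon)$ drift $f$ in the action violates $\|G\|_0<\delta$. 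The missing step is precisely the paper's third canonical transformation $(\lambda,H)\mapsto(\tau,h)$ with $h=c_1H\sigma^{2/3}(\lambda)$ and $\tau=c_2\int_0^\lambda\sigma^{-2/3}(\mu)\,d\mu$: this identifies the correct adiabatic invariant $h$ (conserved to $O(\epsilon^2)$ per return, killing $f$) and a rescaled time in which the rotation number $\epsilon J_0'(h)$ depends on the action alone, after which the small twist theorem applies verbatim. Without this normalization (or an equivalent averaging step) the invariant-curve construction does not go through as written.
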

 
 \begin{proof}
 The main idea of the proof is the reduction of the Hamiltonian system in the form \eqref{eq:basic} to the near-integrable form so a version of KAM theory could be applied.  \\
 
 \noindent
 {\bf Canonical transformations.} 
  First, we carry out the standard  transformation  to the action-angle variables, where the action variable is determined by the leading part of the Hamiltonian \\
 $H_0 = \frac{y^2}{2a(\lambda)} + b(\lambda) |x|$ as the area enclosed by the phase curve, so then 
 \begin{equation}\label{eq:action_0}
 I = 4 \int_{y_{-}}^{y_{+}} x(H,y) dy = 4 \int_0^{\sqrt{2aH}}  \left (  H-\frac{y^2}{2a} \right ) \frac 1 b \,dy = \frac 8 {3b(\lambda)} \sqrt{2a(\lambda)} H^{3/2}.
 \end{equation}
 The angle variable  then can be defined on the interval $[-1/2,1/2)$, the Hamiltonian function is 1-periodic in $\phi$ and depends on the absolute value of $\phi$.

The variables change according to 
 \[
 (x,y,t,H(|x|,y,t)) \rightarrow (\phi, I, t, H(|\phi|,I,t)).
 \]
  We use the ordering: (position, momentum, evolutionary parameter (time), Hamiltonian function), so the new position variable is $\phi$, new momentum is $I$, etc.

 The Hamiltonian function in the action-angle variables takes the form
 \[
 H = H_0(I,\lambda) + \epsilon H_1(I, |\phi|, \lambda, \epsilon),
 \]
 where the smallness of the  perturbation $H_1$ is due to the slow time dependence
  
 Our second step, is to switch the role of $(\phi,I)$ and $(\lambda = \epsilon t , H)$:
 \[
 (\phi, I, t, H(|\phi|,I,t)) \rightarrow (t,H, \phi, I(t,H,|\phi|)), 
 \]
 using invariance of the Poincar\'e form with respect to the  Hamiltonian flow
 \[
 Id\phi -H(I, |\phi|, \lambda, \epsilon) dt = -\frac{1}{\epsilon} (Hd\lambda - \epsilon I(H,\lambda ,|\phi|,  \epsilon) d \phi).
 \]
 The Hamiltonian function takes the form
 \[
\tilde  I = \epsilon I_0(H,\lambda) + \epsilon^2  I_1 (H, \lambda,  |\phi|, \epsilon),
 \]
 where $I_0$ is given by \eqref{eq:action_0}. As above, the Hamiltonian function is analytic in all variables away from the set $\phi =0, 1/2$. \\
 
The last third  step is to apply an analogue of the action-angle transformation to $I_0(H, \lambda)$, so we have the following change of variables 
\[
 (\lambda ,H, \phi, I(\lambda ,H,|\phi|)) \rightarrow (\tau, h, \phi, J(\tau,h,|\phi| )  ).
\]
The Hamiltonian takes the form 
\[
J(\tau,h,|\phi| ) =  \epsilon J_0(h)+ \epsilon^2  J_1 (\tau, h, |\phi|) .
\]

To obtain explicitly the last transformation, we first rewrite \eqref{eq:action_0} as
\[
I = \sigma(\lambda) H^{\frac 3 2},
\]
where $\sigma(\lambda +L) = \sigma(\lambda)$ is $L-$periodic with $L$ being the length of the boundary of the domain. \\

We look for a symplectic transformation of the form
\begin{eqnarray}
h &=& c_1 H \sigma^{2/3}(\lambda) \\
\tau &=& c_2 \int_0^{\lambda} \sigma^{-2/3} (\mu) d\mu. \nonumber
\end{eqnarray} 
To obtain $dH\wedge d\lambda = dh\wedge d\tau$, we need $c_1c_2 =1$. The freedom in the choice of these constants is used to normalize $\tau$, so that it has period 1. Thus, 
\[
c_1 = \frac{1}{c_2} =  \int_0^{L} \sigma^{-2/3} (\mu) d\mu
\]
and then, 
\[
J_0(h) =  \sigma(\lambda) H^{\frac 3 2} = \sigma(\lambda) \left (\frac{h}{c_1 \sigma(\lambda)^{2/3}} \right )^{3/2} = \frac{h^{3/2}}{ ( \int_0^L \sigma^{-2/3}(\mu) d\mu )^{3/2}}.
\]

 The last transformation leads to the Hamiltonian function which is explicitly a small perturbation of an integrable one $J_0(h)$.  As before, all  non-smooth dependence is contained in the evolutionary parameter $\phi$ 
 at $\phi = 0, 1/2$. 
 Integrating the Hamiltonian equations from $\phi=0$ to $\phi = 1/2$, we obtain the billiard (collision) map which is an analytic  map 
  \begin{eqnarray}\label{eq:mainmap}
 \tau_2 &=& \tau_1 +  \epsilon \, J_0'(h_1) +  \epsilon^2 f(\tau_1, h_1, \epsilon) \\
 h_2 &=& h_1 + \epsilon^2 g(\tau_1, h_1, \epsilon) , \nonumber
 \end{eqnarray}

  to which Moser's small twist theorem can be applied. \\

 \noindent
 {\bf Moser's small twist theorem.} For the reader's convenience, we recall the statement of the Moser's twist theorem, see   \cite{Moser}. We use the original version since we are not concerned with the optimal smoothness conditions. For more refined versions, see e.g. \cite{LeviMoser} and references therein. 
 
 Assume that the area-preserving smooth map $\Psi$  of the set $D = \{ x\in {\mathbb R}, a < y < b\}$, depending on a parameter $\gamma>0$, has the form
 \begin{eqnarray}
 x_2 &=& x_1 + \gamma \, y_1 + \gamma F(x_1, y_1, \gamma) \, \label{eq:mtm} \\
 y_2 &=& y_1 + \gamma \, G(x_1, y_1, \gamma) , \nonumber
 \end{eqnarray}
 where $F$ and $G$ are differentiable sufficiently many times and 1-periodic in $x$. 
 Assume that the map possesses the curve intersection property: any closed curve given by a graph $y = f(x)$ with $f'$ sufficiently small, will intersect its image.\footnote{If the map is homeomorphic to  an area-preserving map, then the self-intersection property clearly holds.}
 
 We will denote by $||F||_{k}$ the usual norm in the space of  functions which have $k$ continuous derivatives
 \[
 ||F||_k = \sup_D  \,\, \left |  \left (\frac{\partial}{\partial x} \right )^{\kappa_1}  \left (\frac{\partial}{\partial y} \right )^{\kappa_2} F(x,y) \right |, \kappa_1+\kappa_2 \leq k.
 \]
 

 \begin{theorem}[Moser's small twist theorem]
 For any $\mu > 0$ and integer $s\geq 1$, there exists $\delta(c_0,\mu,s) > 0$, independent of $\gamma$, and integer $l(s)$ such that  if
 \[
 ||F||_{0} + ||G||_{0} <   \, \delta, \,\,\, ||F||_{l} + ||G||_{l} <  c_0
 \]
 then there exists an invariant curve in $D$, given by  $x =  \theta + u(\theta), y= y_0+v(\theta)$  with both $u(\theta)$ and $v(\theta)$ of period 1, $||u||_s + ||v||_s < \mu$. The mapping induced on the invariant curve is given by $\theta_2 = \theta_1 + \gamma y_0$ with $a+\mu < y_0 < b - \mu$.
 \end{theorem}
 
 To apply the small twist  theorem, we make the transformation $y = J'(h), x=\tau, \gamma = \epsilon $ in \eqref{eq:mainmap} using that $J'(h) \neq 0$. The transformed map is in the same form as \eqref{eq:mtm}.
 The map \eqref{eq:mainmap} is analytic by our assumptions and therefore the bound on higher derivatives  $||\epsilon f||_l + ||\epsilon g||_l < c_0$ holds for sufficiently small $\epsilon$, as well as 
 $||\epsilon f||_0 + ||\epsilon g||_0 < \delta$, taking $\epsilon$ even smaller if necessary.  Thus, we obtain an invariant curve in the billiard map which gives rise to  a quasiperiodic solution in the full system.
 
  \end{proof}

 \begin{cor}
 By a standard argument in KAM theory, as the small parameter  $\epsilon \rightarrow 0$, the relative measure of the quasiperiodic solutions tends to the full measure. Therefore in the $\delta$ neighborhood of the boundary the relative 
 measure of the set occupied by quasiperiodic solutions tends to 1 as $\delta \rightarrow 0$.
 \end{cor}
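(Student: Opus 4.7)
The plan is to upgrade the single invariant curve produced by the theorem to a Cantor family of invariant curves whose union has nearly full measure, and then to choose the scaling parameter $\epsilon$ as a function of the neighborhood width $\delta$ so that this measure estimate transfers back to the original coordinates.

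First I would revisit the small twist map \eqref{eq:mainmap} in the rescaled variables $(\tau,h)$, with twist $\epsilon J_0'(h)$ and perturbation of size $O(\epsilon^2)$. Since $J_0(h) \propto h^{3/2}$, the twist $J_0'$ is monotone on any compact sub-interval $[a,b]\subset(0,\infty)$. Applying a quantitative version of Moser's theorem to every rotation number $\omega = \epsilon J_0'(h_0)$ satisfying a Diophantine condition $|\omega q - p| \geq \gamma/q^\nu$ produces an invariant curve for each such $h_0$. A standard number-theoretic estimate shows that the set of Diophantine $h_0\in[a,b]$ has relative measure at least $1 - C\gamma/\epsilon$; balancing $\gamma$ against the smallness hypothesis of Moser's theorem (for instance $\gamma = \sqrt{\epsilon}$) yields a union of invariant curves whose initial conditions occupy relative measure at least $1 - C\sqrt{\epsilon}$ in the annulus.

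Next I would pull this estimate back through the three canonical transformations and the final rescaling $r = \epsilon^2 R$, $p_r = \epsilon P_R$ used in the proof of the theorem. All of these are analytic diffeomorphisms with Jacobians bounded above and below on the region of interest, so they distort relative measures only by a bounded multiplicative factor; this establishes the first assertion of the corollary. For the second assertion, given $\delta>0$ I would choose $\epsilon = \epsilon(\delta)$ so that $C\epsilon^2 = \delta$, i.e.\ $\epsilon = \sqrt{\delta/C}$. The theorem then places all quasiperiodic orbits obtained above into the band $\{|r|<C\epsilon^2\} = \{|r|<\delta\}$, and the previous paragraph shows that these orbits occupy relative measure at least $1 - C\delta^{1/4}$ there, which tends to $1$ as $\delta\to 0$.

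The main obstacle is the first step: Moser's theorem as quoted in the paper produces only a single invariant curve, whereas the corollary requires a Cantor family with a controlled measure estimate. One must therefore either re-prove the theorem while tracking how the smallness parameter in Moser's statement depends on the Diophantine constant $\gamma$, or invoke a more refined KAM statement (in the spirit of \cite{LeviMoser}) which already incorporates the measure bound. A secondary subtlety is the $\epsilon^{-1}$ rescaling intrinsic to small-twist theorems, which forces $\gamma$ to vanish with $\epsilon$ at the correct rate; the $\sqrt{\epsilon}$ choice above is the standard compromise between the Diophantine condition and the perturbative smallness requirement.
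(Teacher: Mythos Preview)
The paper does not actually give a proof of this corollary: it is stated immediately after Theorem~1 with the sole justification ``by a standard argument in KAM theory,'' and no further details are supplied. So there is nothing in the paper to compare your argument against line by line.

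Your sketch is a reasonable way to unpack that ``standard argument,'' and the overall architecture is correct: (i) pass from a single invariant curve to a Cantor family indexed by Diophantine rotation numbers, (ii) use the Lebesgue estimate on Diophantine numbers to bound the relative measure of the complement, (iii) transport the estimate back through the analytic symplectic changes of variables and the scaling $r=\epsilon^2R$, $p_r=\epsilon P_R$, and (iv) identify $\delta$ with $C\epsilon^2$. You also correctly flag the one genuine issue, namely that Moser's theorem as quoted in the paper yields only a single curve, so one must invoke a quantitative KAM statement (of the type in \cite{LeviMoser} or in Lazutkin's and P\"oschel's work) in which the smallness threshold depends explicitly on the Diophantine constant.

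One quantitative point to revisit: the specific balance $\gamma=\sqrt{\epsilon}$ and the resulting exponent $\delta^{1/4}$ depend on exactly which KAM theorem you invoke and on how the smallness condition scales with $\gamma$; in the small-twist regime the rotation numbers live in an interval of length $O(\epsilon)$, so the bookkeeping between $\gamma$, $\epsilon$, and the perturbation size $O(\epsilon^2)$ is a little more delicate than in the ordinary twist case. The qualitative conclusion (relative measure $\to 1$) is unaffected, but if you want a sharp rate you should track the dependence carefully rather than commit to $\sqrt{\epsilon}$.
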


 \begin{cor}
The above system  possesses a perpetually conserved adiabatic invariant (an approximately conserved quantity in a slowly varying Hamiltonian system.). It is given by 
$h = c_1 H \sigma^{2/3}(\lambda)$ according to the theorem. Since adiabatic invariant is defined up to a multiplication by a constant and there is a correspondence $H \rightarrow p_s \rightarrow \dot s$
 we conclude that in the billiard map 
 $I =  |\dot s|  \, \sigma^{2/3}(s)$ is 
conserved with the accuracy of order $O(\epsilon)$.
 \end{cor}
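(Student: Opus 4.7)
The plan is to identify the variable $h$ produced by the third canonical transformation in the proof of the theorem as the adiabatic invariant, then pull it back through the rescalings to physical boundary variables, using that an adiabatic invariant is defined only up to multiplicative constants.

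First, I invoke the theorem: the billiard map in the $(\tau,h)$ coordinates has the small twist form \eqref{eq:mainmap}, and the Moser invariant curves guaranteed by the twist theorem confine $h$ along an orbit to $O(\epsilon)$ deviation from its mean under iteration. Since $h = c_1 H\,\sigma^{2/3}(\lambda)$, the quantity $H\,\sigma^{2/3}(\lambda)$, and any nonzero constant multiple of it, is preserved with $O(\epsilon)$ accuracy by the billiard map.

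Second, I unwind the rescalings that brought the physical system to the form \eqref{eq:basic}. The substitutions $K = \epsilon^2 F$ and $s = \epsilon S$, together with the renaming $F \mapsto H$ and $\lambda = \epsilon t$, identify $H = -\epsilon^{-2} p_s$ and $\lambda = \epsilon t = \epsilon S = s$, so in particular $\sigma^{2/3}(\lambda) = \sigma^{2/3}(s)$. At a collision point one has $r = 0$, and Hamilton's equation for the boundary-coordinate Hamiltonian $E = p_r^2/2 + p_s^2/(2(1+k|r|)^2) + V(|r|,s)$ reduces to $\dot s = p_s$ exactly, so $H$ is proportional to $\dot s$ with the constant $-\epsilon^{-2}$. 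Absorbing all constants and signs yields $I = |\dot s|\,\sigma^{2/3}(s)$, conserved to accuracy $O(\epsilon)$ in the billiard map.

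There is no serious obstacle here; the key conceptual point is that the third canonical transformation in the theorem precisely swaps the slow Hamiltonian $H$ with the slow time $\lambda$, so the KAM conservation statement about the new momentum $h$ translates immediately into adiabatic invariance of $H\,\sigma^{2/3}(\lambda)$. The remaining work is only the bookkeeping of rescalings and the observation that the identity $\dot s = p_s$ becomes exact on the collision locus $r=0$, so no $O(\epsilon^2)$ corrections from the factor $(1+k|r|)^{-2}$ enter at the level of the billiard map.
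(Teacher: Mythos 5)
Your overall route is the same as the paper's: the corollary is stated there with no more justification than the chain $h = c_1 H\sigma^{2/3}(\lambda)$, ``adiabatic invariants are defined up to a constant,'' and the correspondence $H \rightarrow p_s \rightarrow \dot s$, and your write-up fleshes out exactly that chain (your remark that $\dot s = p_s$ holds exactly on the collision locus $r=0$ is a correct and useful addition). However, the one step where you try to make the correspondence precise is not right as stated: you claim $H = -\epsilon^{-2}p_s$. In fact $-\epsilon^{-2}p_s$ equals the \emph{full} Hamiltonian $F$ \emph{before} the term $-\epsilon^{-2}\sqrt{2E-2V_0(\epsilon S)}$ is discarded, and that discarded term depends on the slow time. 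The $H$ of \eqref{eq:basic}, which is the quantity entering $h = c_1 H \sigma^{2/3}(\lambda)$, instead satisfies, at a collision point,
\[
p_s \;=\; \sqrt{2E-2V_0(s)} \;-\; \epsilon^2 H \;+\; O(\epsilon^4),
\]
which is precisely why the paper can say that $p_s$ is of order $1$ while $H$ is also of order $1$. Consequently $c_1 H\sigma^{2/3}(\lambda)$ and $-c_1\epsilon^{-2}p_s\sigma^{2/3}(s)$ differ by the non-constant function $c_1\epsilon^{-2}\sqrt{2E-2V_0(s)}\,\sigma^{2/3}(s)$, so the passage from conservation of $h$ to conservation of $|\dot s|\,\sigma^{2/3}(s)$ cannot be dismissed as ``absorbing all constants and signs'': what conservation of $h$ actually yields is that $\bigl(\sqrt{2E-2V_0(s)}-|\dot s|\bigr)\sigma^{2/3}(s)$ is preserved. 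To recover the statement as printed you would either have to argue at the same informal level as the paper (the loose correspondence $H \to p_s \to \dot s$) or track the discarded $\epsilon^{-2}$ term explicitly and explain in what sense, and to what order, it can be absorbed; as it stands, this step is a genuine gap in your derivation.
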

  
 \section{Applications}
 
 In this section, we  consider three examples of potentials and scatterers and apply the theorem  from the previous section to establish positive measure of quasiperiodic solutions, which in turn implies the lack of ergodicity.

 \subsection{Circular scatterer, centered at the origin with non-symmetric harmonic potential}
Let $a$ be the radius of the circle and let $V(x, y) = \omega_x^2 x^2 + \omega_y^2 y^2$. We have 
\begin{theorem}
 Let the  billiard system of a particle in a harmonic potential with circular scatterer of radius $a>0$ be such that 
 $$
2\min(\omega_x^2, \omega_y^2) > \max(\omega_x^2, \omega_y^2).
$$
 If the energy $E$ satisfies   $$ \max(\omega_x^2, \omega_y^2)  < \frac{E}{a^2} <  2\min(\omega_x^2, \omega_y^2), $$ then there is a positive measure of quasiperiodic orbits near the boundary.
  
\end{theorem}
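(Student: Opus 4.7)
My plan is to reduce this directly to Theorem 1 by computing the three boundary functions $k(s)$, $V_0(s)$, $V_1(s)$ that appear in its hypotheses and showing that the conditions $E > V_0(s)$ and $V_1(s) > 2k(s)(E - V_0(s)) > 0$ for all $s\in[0,L)$ translate exactly into the two stated inequalities on $E/a^2$.

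First I would parametrize the circle by arc length, so the boundary point corresponding to $s$ is $(a\cos(s/a), a\sin(s/a))$, with outward unit normal $\hat n(s) = (\cos(s/a), \sin(s/a))$ and constant curvature $k(s) = 1/a$. Writing $u(s) := \omega_x^2\cos^2(s/a) + \omega_y^2\sin^2(s/a)$, restricting $V$ to $\partial\Omega$ and taking the outward normal derivative give
\[
V_0(s) = a^2 u(s), \qquad V_1(s) = \nabla V \cdot \hat n \big|_{\partial\Omega} = 2a\, u(s),
\]
so $V_1 > 0$ automatically and the relation $V_1 = 2V_0/a$ holds pointwise.

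Next, the condition $E > V_0(s)$ for every $s$ is equivalent to $E/a^2 > \max(\omega_x^2, \omega_y^2)$, the left-hand inequality. The condition $V_1(s) > 2k(s)(E - V_0(s))$ becomes, after multiplying by $a/2$, the inequality $a^2 u(s) > E - a^2 u(s)$, i.e.\ $2a^2 u(s) > E$; taking the minimum over $s$ this reads $E/a^2 < 2\min(\omega_x^2, \omega_y^2)$, the right-hand inequality. The compatibility assumption $2\min(\omega_x^2, \omega_y^2) > \max(\omega_x^2, \omega_y^2)$ is precisely what ensures that the admissible range of $E$ is nonempty. Invoking Theorem 1 then gives confinement of orbits started $O(\epsilon^2)$-close to the circle, and Corollary 1 upgrades this to a set of quasiperiodic orbits whose relative measure in a shrinking neighborhood of $\partial\Omega$ tends to $1$; in particular this set has positive measure.

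There is no substantive obstacle here beyond the routine identifications above. The one mild care point I anticipate is a sign/convention bookkeeping: making sure that $r$ really measures distance into the \emph{exterior} of the scatterer, so that the curvature $k = 1/a$ enters with a plus sign and the normal derivative $V_1 = 2a u(s)$ is identified with a positive quantity, matching exactly the convention in which the main theorem of Section 2 was proved.
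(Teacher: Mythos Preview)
Your proof is correct and follows essentially the same approach as the paper: both compute $V_0(s)=a^2u(s)$ and $V_1(s)=2a\,u(s)$ with $u(s)=\omega_x^2\cos^2(s/a)+\omega_y^2\sin^2(s/a)$, then verify the two hypotheses of Theorem~1 by extremizing $u$ over the circle. Your version is slightly more streamlined in that you exploit the identity $V_1=2V_0/a$ and take $\max$/$\min$ directly rather than locating the critical points $s/a=n\pi/2$ explicitly, but the argument is the same.
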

\begin{proof}

Note that $x = (r + a)\cos \frac{s}{a}$, and $y = (r + a)\sin \frac{s}{a}$. Hence
$$
V(r, s) = \omega_x^2 (r + a)^2 \cos^2 \frac{s}{a} + \omega_y^2 (r + a)^2 \sin^2 \frac{s}{a}.
$$
Defining $r = \epsilon^2 R$, $s = \epsilon S$, we get that 
\begin{align*}
V(R, s) = \omega_x^2 a^2 \cos^2\frac{\epsilon S}{a} &+ \omega_y^2 a^2 \sin^2 \frac{\epsilon S}{a}\\
&+ 2\epsilon^2 \omega_x^2 R a \cos^2 \frac{\epsilon S}{a} + 2\epsilon^2 \omega_y^2 R a \sin^2 \frac{\epsilon S}{a}\\
&+ \mathcal O(\epsilon^4).
\end{align*}
Therefore, we have  $$V_0(\epsilon S) = \omega_x^2 a^2 \cos^2\frac{\epsilon S}{a} + \omega_y^2 a^2 \sin^2 \frac{\epsilon S}{a} ,$$
$$
V_1(\epsilon S) = 2 \omega_x^2 a \cos^2 \frac{\epsilon S}{a} + 2\omega_y^2 a \sin^2 \frac{\epsilon S}{a}.
$$
Now, we find conditions for $E > V_0$ and $V_1 > \frac{1}{a}(2E - 2V_0)$. Let $\lambda = \frac{\epsilon S}{a}$ and note that $V_0$ and $V_1$ have critical points at $\lambda = n \frac{\pi}{2}$, $n \in \Z$. We also have that $V_0(n \pi) = \omega_x^2 a^2$ and $V_0((2n + 1)\frac{\pi}{2}) = \omega_y^2 a^2$. Thus, we obtain the lower bound on the energy  $E > \max(\omega_x^2, \omega_y^2) a^2$.

Note that $V_1 - \frac{1}{a}(2E - 2V_0)$ also has critical points at $\lambda = n \frac{\pi}{2}$ since $E$ is constant.
At $\lambda = n \pi$, $$V_1 - \frac{1}{a}(2E - 2V_0) = 2 \omega_x^2 a - 2\frac{1}{a}(E - \omega_x^2 a^2) = 4  a \omega_x^2 - 2\frac{E}{a},$$ and at $\lambda = \frac{(2n + 1)\pi}{2}$,
$$
V_1 - \frac{1}{a}(2E - 2V_0) = 4 a \omega_y^2 - 2\frac{E}{a}.
$$
Thus $$a^2 > \frac{E}{2\min(\omega_x^2, \omega_y^2)},$$
and hence we must have 
\begin{equation}\label{eq:cond_freq}
2 a^2 \min(\omega_x^2, \omega_y^2) > E > a^2 \max(\omega_x^2, \omega_y^2).
\end{equation}
\end{proof}

%
%

Now, we present the results of numerical simulations of such system, see Figure \ref{fig:circle_2}. We used circular scatterer of radius 1 and harmonic potential with $\omega_x = 1.25, \omega_y = 1$, so that they satisfy the condition :
\eqref{eq:cond_freq}

\begin{figure}[ht]
    \centering
    \subfloat[]{{\includegraphics[scale = 0.38]{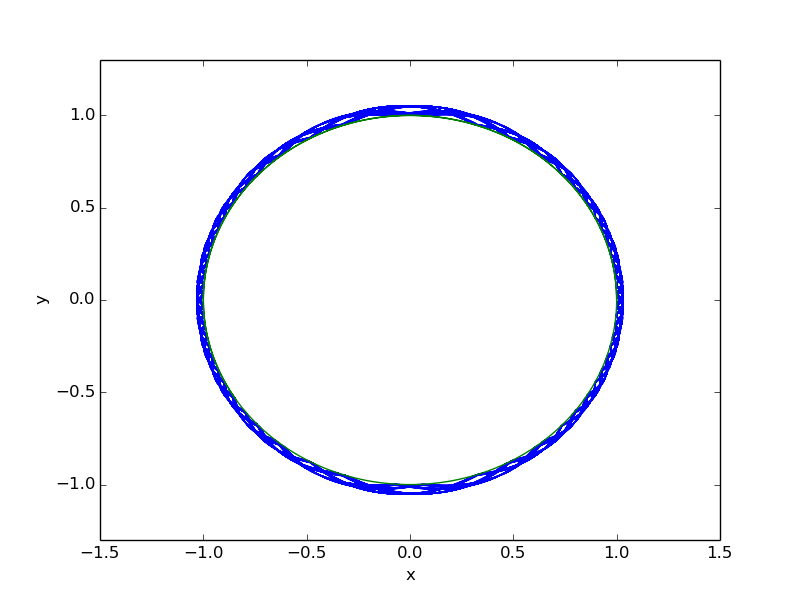} }}%
    \qquad
    \subfloat[]{{\includegraphics[scale = 0.38]{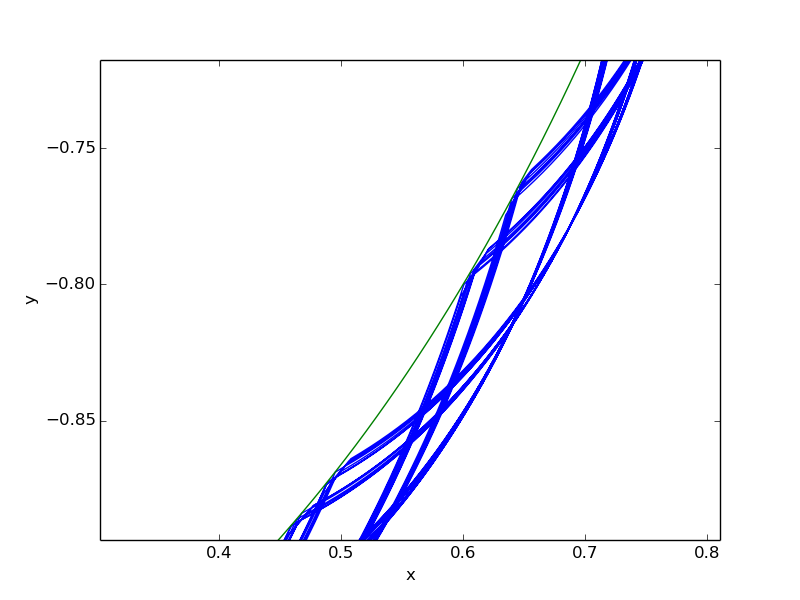} }}%
    \caption{Panel (A):  Orbit with a circular scatterer of radius 1 and harmonic potential, $\omega_x = 1.25, \omega_y = 1$. Initial conditions: $x_0 = 1.03$, $y_0 =0$, $\dot x_0 = 0,$ $\dot y_0 = 0.4,$ $T =  5$ sec.  Panel (B): Magnification near a point at the scatterer.  }%
    
    \label{fig:circle_2}

\end{figure}

 \subsection{ Off-center circular scatterer in linear potential}
 Let $a$ be the radius of the circle, and $c$ be the offset along the x-axis. Assume $c, a > 0$, then we have 
 \begin{theorem}
 Let the billiard system of a particle in linear potential $V = \mu (x^2 + y^2)^\frac{1}{2}$ with circular off-center scatterer be such that $a^2 > 4c^2 + 5 ac$.
  If 
  \[
  \mu(c+a) < E <  \mu \frac a 2 \frac{a-c}{a+c} + \mu(a-c), 
  \]
  then there is a positive measure of quasiperiodic orbits near the boundary.
 
 \end{theorem}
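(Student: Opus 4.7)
The plan is to reduce the problem to a direct application of the main Theorem by computing $V_0(s)$, $V_1(s)$, and the curvature $k(s)$ in the given geometry, then verifying the two hypotheses.

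I would first parameterize the circular scatterer: writing $s = a\theta$ for arclength along the boundary, the boundary point at angle $\theta$ is $(c + a\cos\theta,\, a\sin\theta)$ with outward unit normal $(\cos\theta, \sin\theta)$, so a point at boundary coordinates $(r, s)$ is $(c + (a+r)\cos\theta,\, (a+r)\sin\theta)$. The curvature is constant, $k(s) = 1/a$. Substituting into $V = \mu\sqrt{x^2+y^2}$ and Taylor expanding in $r$ around the boundary yields
\[
V_0(s) = \mu\sqrt{c^2 + 2ac\cos\theta + a^2}, \qquad V_1(s) = \frac{\mu(c\cos\theta + a)}{\sqrt{c^2 + 2ac\cos\theta + a^2}}.
\]
Geometrically, $V_0/\mu$ is the distance from the origin to the boundary point, and $V_1/\mu$ is the cosine of the angle between the outward normal of the scatterer and the unit radial vector from the origin.

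The first hypothesis, $E > V_0(s)$ for all $s$, reduces to $E$ exceeding $\max V_0 = \mu(a+c)$ (attained at $\theta = 0$, the boundary point farthest from the origin), giving the stated lower bound. For the second hypothesis, $V_1(s) > (2/a)(E - V_0(s))$, I would introduce $w = \sqrt{c^2 + 2ac\cos\theta + a^2} \in [a-c, a+c]$ so that $V_0 = \mu w$ and $V_1 = \mu(w^2 + a^2 - c^2)/(2aw)$; after clearing denominators the inequality becomes the quadratic condition
\[
5\mu w^2 - 4Ew + \mu(a^2 - c^2) > 0 \qquad \text{for all } w \in [a-c, a+c].
\]
A case analysis on the location of the critical point $w^{\ast} = 2E/(5\mu)$ of this quadratic, relative to the interval $[a-c, a+c]$, identifies the binding constraint on $E$ and yields the stated upper bound. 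The hypothesis $a^2 > 4c^2 + 5ac$ is precisely what makes the resulting energy window $\bigl(\mu(a+c),\; \mu a(a-c)/(2(a+c)) + \mu(a-c)\bigr)$ non-empty, as one verifies directly by subtracting the two bounds.

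With both hypotheses verified, the main Theorem immediately produces the desired positive-measure set of quasiperiodic orbits near the scatterer. The main technical step is the algebraic analysis of the quadratic inequality: determining which point of $[a-c, a+c]$ gives the binding constraint on $E$, translating it into the form stated, and checking that non-emptiness of the energy window corresponds exactly to the geometric hypothesis on $a$ and $c$.
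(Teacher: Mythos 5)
Your overall strategy is the same as the paper's: compute $V_0$, $V_1$, $k=1/a$ in boundary coordinates and check the two hypotheses of the main theorem pointwise on the boundary. Your formulas for $V_0$, $V_1$ and the substitution $w=\sqrt{c^2+2ac\cos\theta+a^2}\in[a-c,a+c]$ are correct, and the reduction to $5\mu w^2-4Ew+\mu(a^2-c^2)>0$ is right. The one place the proposal goes wrong is the claim that the case analysis ``yields the stated upper bound.'' It does not. Carrying out your own computation: on $[a-c,a+c]$ the function $\tfrac{a}{2}V_1+V_0=\mu\,\frac{5w^2+a^2-c^2}{4w}$ has its critical point at $w^\ast=\sqrt{(a^2-c^2)/5}$, which lies \emph{below} $a-c$ whenever $a>\tfrac32 c$ (automatic under $a^2>4c^2+5ac$), so the function is increasing on the interval and its minimum is attained at the near point $w=a-c$, with value
\[
\mu\,\frac{5(a-c)+(a+c)}{4}=\mu\,\frac{3a-2c}{2}=\mu\frac a2+\mu(a-c).
\]
The sharp energy window is therefore $\mu(a+c)<E<\mu\frac a2+\mu(a-c)$, whose non-emptiness is $a>4c$ --- not $a^2>4c^2+5ac$. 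The bound stated in the theorem, $\mu\frac a2\frac{a-c}{a+c}+\mu(a-c)$, is \emph{strictly smaller} than $\mu\frac a2+\mu(a-c)$ (since $\frac{a-c}{a+c}<1$), so no honest analysis of the quadratic can produce it; what is true is that the stated bound \emph{implies} $E<\mu\frac{3a-2c}{2}$, so the theorem follows a fortiori. Your proof is salvaged by adding that one-line comparison, but as written the two final claims are inconsistent with each other: if the analysis really gave the stated bound, then its non-emptiness would be $a^2>4c^2+5ac$ as you say; since it actually gives the larger bound, the non-emptiness condition of the window you derive is only $a>4c$.

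For what it is worth, the discrepancy originates in the paper itself: the paper asserts that minimizing $\tfrac{a}{2}V_1+V_0$ gives $\mu\frac a2\frac{a-c}{a+c}+\mu(a-c)$, which appears to come from evaluating $V_1$ at $\theta=\pi$ with the denominator $a+c$ instead of $|a-c|=a-c$ (the correct value there is $V_1=\mu$, not $\mu\frac{a-c}{a+c}$). The theorem as stated remains true because the erroneous bound is smaller than the correct one, but it is not sharp, and the hypothesis $a^2>4c^2+5ac$ could be weakened to $a>4c$. Your $w$-substitution is actually the cleaner route and, done carefully, proves this stronger statement; just do not force it to reproduce the paper's constant.
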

 
 \begin{proof}
In the case where $V = \mu(x^2 + y^2)^\frac{1}{2}$, we have that in the new boundary  coordinates, 
\begin{equation}
V = \mu ((r + a)^2 + 2(r + a)c \cos \frac{s}{a} + c^2)^{\frac{1}{2}}.
\end{equation}
Thus for $r = \epsilon^2 R$, $s  = \epsilon S$,
$$
V = \mu (c^2 + (2 c a \cos \frac{\epsilon S}{a} ) + a^2)^\frac{1}{2} + \frac{\mu}{2} \epsilon^2 (2 c R \cos \frac{\epsilon S}{a} + 2 a R) (c^2 + (2 c a \cos \frac{\epsilon S}{a} ) + a^2)^{-\frac{1}{2}} + o(\epsilon^2)
$$
and therefore we have  $$V_0 = \mu (c^2 + 2 c a \cos\frac{\epsilon S}{a} + a^2)^{\frac{1}{2}}$$ 
and
$$
V_1 = \frac{\mu}{2} (2 c  \cos\frac{\epsilon S}{a} + 2a)(c^2 + 2ca \cos\frac{\epsilon S}{a} + a^2)^{-\frac{1}{2}}.
$$
We will require  $a>c$ for otherwise the linear part of the potential $V_1$ is not well defined for some values of $s$.

Thus both $V_0$ and $V_1$ have critical points at $\lambda = n \pi$, where $\lambda = \frac{\epsilon S}{a}$. 
To get $E > V_0$, we must  have $$E> V_0(n \pi) = \mu |c \pm a|.$$ Thus $E > \mu (c + a)$, since $c, a > 0$.\\

To satisfy the condition $V_1 - \frac{1}{a} (2E - 2V_0) > 0$, we rewrite it as 
\[
\frac{aV_1}{2} + V_0 > E.
\]
Minimizing the left hand-side, we obtain
\[
\frac a 2 \mu \frac{a-c}{a+c} + \mu(a-c) > E > \mu(a+c).
\]
To have nonempty range of $E$, we need to satisfy the inequality
\[
a^2 > 4c^2 + 5 ac,
\]
which also implies $a>c$.


\end{proof}


 \subsection{ Ellipsoidal scatterer with linear potential.}

 \begin{theorem}
Consider  the billiard system of a particle in linear potential $V = \mu(x^2 + y^2)^\frac{1}{2}$ and the ellipsoidal scatterer 
given by
  \[
 \frac{x^2}{a^2}+ \frac{y^2}{b^2} = 1. 
 \]
 There is a positive measure of quasiperiodic orbits near the boundary if 
 $E$ satisfies the inequalities
 
\[
\max \{a,b\} <\frac{E}{\mu} < \min \left  \{ \frac{2a^2+b^2}{2a}, \frac{2b^2+a^2}{2b} \right \}.
\]
 
 \end{theorem}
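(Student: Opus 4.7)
The plan is to reduce the problem to a direct verification of the hypotheses $E > V_0(s)$ and $V_1(s) > 2k(s)(E - V_0(s))$ of the main theorem, after which the existence of a positive-measure set of quasiperiodic orbits near the boundary is immediate. The only substantive work is to compute the three boundary functions $V_0$, $V_1$, $k$ for the ellipse in the linear potential and to extract the sharp range of admissible $E$.

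I would parametrize the ellipse by $\theta \mapsto (a\cos\theta, b\sin\theta)$ and use $\theta$ in place of arc length $s$, since the required inequalities are pointwise on $\partial\Omega$. Restricting $V = \mu\sqrt{x^2+y^2}$ to the boundary gives $V_0(\theta) = \mu\sqrt{a^2\cos^2\theta + b^2\sin^2\theta}$, and the standard formula yields curvature $k(\theta) = ab/(a^2\sin^2\theta + b^2\cos^2\theta)^{3/2}$. The coefficient $V_1$ is the outward normal derivative of $V$ at the boundary; using the unit outward normal $n = (b\cos\theta, a\sin\theta)/\sqrt{a^2\sin^2\theta + b^2\cos^2\theta}$ and $\nabla V = \mu(x,y)/\sqrt{x^2+y^2}$, one finds
\[
V_1(\theta) = \frac{\mu ab}{\sqrt{a^2\cos^2\theta + b^2\sin^2\theta}\,\sqrt{a^2\sin^2\theta + b^2\cos^2\theta}}.
\]

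Next I would translate the two hypotheses into bounds on $E/\mu$. The condition $E > V_0(\theta)$ for all $\theta$ is equivalent to $E/\mu > \max(a,b)$, giving the lower bound. For the upper bound, I rewrite $V_1 > 2k(E - V_0)$ as $E < V_0 + V_1/(2k)$ and introduce the substitution $u = a^2\cos^2\theta + b^2\sin^2\theta$, so that $a^2\sin^2\theta + b^2\cos^2\theta = a^2+b^2-u$. A short algebraic simplification gives
\[
V_0 + \frac{V_1}{2k} = \mu\sqrt{u} + \frac{\mu(a^2+b^2-u)}{2\sqrt{u}} = \frac{\mu(a^2+b^2+u)}{2\sqrt{u}}.
\]
A quick derivative check shows this function is strictly decreasing in $u$ on the range $u \in [\min(a^2,b^2), \max(a^2,b^2)]$, since $u < a^2+b^2$. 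So its minimum over $\theta$ is attained at $u = \max(a^2,b^2)$, and splitting into the cases $a \geq b$ and $b \geq a$ yields the minimum value $\min\{(2a^2+b^2)/(2a),\,(a^2+2b^2)/(2b)\}$, which is exactly the stated upper bound.

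The substantive step, such as it is, lies in the clean computation of $V_1$ (taken along the outward normal, not along the radial direction) and in recognizing the substitution $u$ that exposes the monotonicity; the remainder is mechanical. Positivity of $V_0$, $V_1$, and $k$ is automatic given $a, b, \mu > 0$, and the interval of admissible $E$ is always non-empty: for instance, $\max(a,b) < (2\max(a,b)^2 + \min(a,b)^2)/(2\max(a,b))$ reduces to $\min(a,b)^2 > 0$. So no auxiliary condition on $a, b$ beyond positivity is required.
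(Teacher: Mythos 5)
Your proposal is correct and follows essentially the same route as the paper: the same parametrization $(a\cos\theta,b\sin\theta)$, the same identification $V_1=\nabla V\cdot\mathbf{n}$ and curvature formula, and the same simplification of $V_0+V_1/(2k)$ to $\mu(a^2+b^2+u)/(2\sqrt{u})$ with $u=a^2\cos^2\theta+b^2\sin^2\theta$. The only (minor, and slightly cleaner) difference is that you establish the location of the minimum by monotonicity in $u$, whereas the paper evaluates at the critical points $s=0,\pi/2$; your non-emptiness check likewise reproduces the paper's concluding remark.
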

 \begin{proof}

 We want to see under what conditions on $a,b, \mu$ the theorem applies. Besides smoothness assumptions, which are clearly satisfied, we need
 \[
   V_1 >  2k (E -V_0) >0,
 \]
 where $k(s)>0$ for any ellipse.

 \begin{figure}[ht]
    \centering
    \subfloat[]{{\includegraphics[scale = 0.5]{Figure_3.png} }}%
    \qquad
    \subfloat[]{{\includegraphics[scale = 0.5]{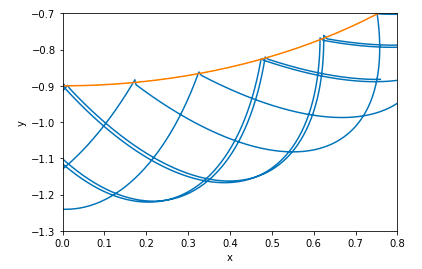} }}%
    \caption{Panel (A):  Orbit with an ellipsoidal scatterer of major axis $1.2$ and minor axis $0.9$, with and linear potential, $\mu = 1$. Initial conditions: $x_0 = 0.9$, $y_0 =-0.7$, $\dot x_0 = 0.6,$ $\dot y_0 = 0.1$.  Panel (B): Magnification near a point at the scatterer.  }%
    
    \label{fig:circle_3}

\end{figure}

 We can rewrite these inequalities as
 \[
 \frac{1}{2k} V_1 + V_0 > E  > V_0,
 \]
 which should hold at each point on the boundary.
 
 We have
 \[
 V_1 = \left .  \frac{\partial V}{\partial r} \right |_{r=0}  = \nabla V \cdot {\bf n} = \mu \cos(\alpha),
 \]
  
 where $\alpha$ is the angle between the  vector in the radial direction and the normal vector to the ellipse boundary pointing outside. 
 
 This leads to a geometric condition 
 \begin{equation}\label{eq:2ineq}
 \frac{\mu \cos(\alpha(s))}{2k(s)} + \mu\sqrt{x^2(s)+ y^2(s)} > E >  \mu\sqrt{x^2(s)+ y^2(s)},
 \end{equation}
 where $x(s), y(s)$ is any parameterization of ellipse (not necessarily a natural one).

We have that $$\cos(\alpha(s)) = \frac{(x(s), y(s)) \cdot {\bf n}}{\sqrt{x^2(s) + y^2(s)}||{\bf n}||}.$$ Thus using the parametrization $(x(s), y(s)) = (a\cos s, b\sin s)$, we obtain  $${\bf n} =  \frac{(b\cos s, a \sin s)}{\sqrt{b^2\cos^2 s+ a^2 \sin^2 s}}$$

and
$$
\cos(\alpha(s)) = \frac{  ab}{\sqrt{x^2(s) + y^2(x)}\sqrt{b^2\cos^2(s) + a^2\sin^2(s)}}.
$$
The curvature is given by
\[
k = \frac{|x''y'-x'y''|}{(x^{'2}+y^{'2})^{3/2}} = \frac{ab}{(a^2 \sin^2s +b^2 \cos^2 s)^{3/2}}.
\]
Thus,
\[
g(s)= \frac{ \cos(\alpha(s))}{2k(s)} + \sqrt{x^2+ y^2} = \frac{a^2 \sin^2 s + b^2 \cos ^2 s }{2\sqrt{x^2+y^2}} + \sqrt{x^2+y^2} = \frac{a^2 +b^2 + a^2 \cos^2 s + b^2 \sin^2 s}{2 \sqrt{a^2 \cos^2 s + b^2 \sin^2 s}}.
\]
Critical points of this expression are $s=0, \pi/2, \pi, 3\pi/2$ and by symmetry we only need to evaluate at $s=0,\pi/2$, which provide 

\[
g(0) = \frac{2a^2+b^2}{2a} , \,\,\,  g(\pi/2) = \frac{2b^2+a^2}{2b}.
\]
Thus, from \eqref{eq:2ineq}, we obtain

\[
\max \{a,b\} <\frac{E}{\mu} < \min \left  \{ \frac{2a^2+b^2}{2a}, \frac{2b^2+a^2}{2b} \right \}.
\]
\end{proof}
\begin{rmk}
We now show that these inequalities have nontrivial solutions. 
Suppose $a > b$. Then $$2(a^3 - b^3) + 4ba (b - a) = 2(a - b)(a^2 + ab + b^2) - 4ab(a - b) = 2(a - b)(a - b)^2 > 0.$$
Thus $$
2a(a^2 + 2b^2) > 2b(b^2 + 2a^2)
$$
and hence 
$$
\frac{2a^2+b^2}{2a} <  \frac{2b^2+a^2}{2b}
$$
Thus we are now solving
$$
a < \frac{E}{\mu} < \frac{2a^2+b^2}{2a}
$$
Which has nontrivial solutions for $b > 0$.
\end{rmk}

\section{Acknowledgement} We would like to thank Eugene Lerman and Yuliy Baryshnikov for several helpful discussions.



\end{document}